\newtheorem{thm}{Theorem}
\newtheorem{prop}[thm]{Proposition}
\newtheorem{remark}{\it Remark}
\newtheorem{example}{Example}
\newcommand{\R}{\mathbb{ R}}
\newcommand{\g}{\mathfrak{ g}}
\DeclareMathOperator{\pr}{pr} \DeclareMathOperator{\Span}{span}
\DeclareMathOperator{\diag}{diag} \DeclareMathOperator{\ad}{ad}
\DeclareMathOperator{\Ad}{Ad} \DeclareMathOperator{\tr}{tr}
\begin{document}

\subjclass[2010]{37J60, 70F25, 70H45} \keywords{Nonholonomic
constraints, invariant measure, Chaplygin ball}

\title[Invariant measures of modified LR and L+R systems]{Invariant measures of modified LR and L+R systems}
\author{Bo\v zidar Jovanovi\' c}

\baselineskip=14pt

\maketitle

\centerline{\sc \small Mathematical Institute SANU}
\centerline{\sc \small Kneza Mihaila 36, 11000, Belgrade, Serbia}
\centerline{\small {\it e--mail}: bozaj@mi.sanu.ac.rs}

\begin{abstract}
We introduce a class of dynamical systems having an invariant
measure, the modifications of well known systems on Lie groups: LR
and L+R systems. As an example, we study modified Veselova
nonholonomic rigid body problem, considered as a dynamical system
on the product of the Lie algebra $so(n)$ with the Stiefel variety
$V_{n,r}$, as well as the associated $\epsilon$L+R system on
$so(n)\times V_{n,r}$. In the 3--dimensional case, these systems
model the nonholonomic problems of a motion of a ball and a rubber
ball over a fixed sphere.
\end{abstract}

\section{Introduction}

This paper describes a class of dynamical systems allowing an
invariant measure, a modification of LR systems introduced by
Veselov and Veselova \cite{VeVe2} and L+R systems introduced by
Fedorov (see \cite{FeRCD, FeKo}). In particular, they model the
nonholonomic problems of rolling the Chaplygin ball and the rubber
Chaplygin ball over a spherical surface.

Recall that the motion of the Chaplygin (balanced, dynamically
asymmetric) ball over a fixed spherical surface is described by
the equations
\begin{equation}
\frac{d}{dt}{\vec{\mathbf k}}=\vec{\mathbf k}\times\vec \omega,
\qquad \frac{d}{dt}{\vec \gamma}=\epsilon \vec
\gamma\times\vec\omega, \label{Chap}\end{equation} where
$\vec\omega$ is the angular velocity of the ball, $\vec\gamma$ is
the unit vector directed from the center of the fixed sphere to
the point of contact, $\vec{\mathbf k}=\mathbb I \vec\omega+ D\vec
\omega-D (\vec \omega,\vec\gamma)\vec \gamma$ is the momentum of
the ball with respect the contact point, and $\mathbb
I=\diag(I_1,I_2,I_3)$ is the inertia operator of the ball (e.g.,
see \cite{BM, BFM}).

The parameters $D$ and $\epsilon$ are equal to $m^2\rho$ and
${\sigma}/({\sigma\pm\rho})$, where $m$ is the mass of the rolling
ball, $\rho$ its radius and $\sigma$ is the radius of the fixed
sphere. The sign "$+$" denotes the rolling over outer surface of
the fixed sphere, while the sign "$-$" denotes either the rolling
over inner surface of the fixed sphere ($\sigma>\rho$), or the
case where the fixed sphere is within the rolling ball
($\sigma<\rho$, in this case the rolling ball is actually a
spherical shell). As $\sigma$ tends to infinity, $\epsilon$ tends
to $1$, and we obtain the equation of the rolling of the Chaplygin
ball over a horizontal plane.

In the space $(\vec{\omega}, \vec\gamma)$, the density $\mu$
 of an invariant measure is equal to
\begin{equation}
\mu=\sqrt{ \det(\mathbb I+D\mathbb E)\left(1-D(\vec\gamma,(\mathbb
I+D\mathbb E)^{-1}\vec\gamma )\right)}, \label{mu-ch}
\end{equation} the expression given by Chaplygin for $\epsilon=1$ \cite{Ch1}, and by Yaroshchuk for $\epsilon \ne 1$ \cite{Ya}. Through the paper
the operator $\mathbb E$ denotes the identity operator on the
appropriate spaces.

Similarly, the motion of the rubber Chaplygin's ball over a fixed
spherical surface is described by the equations (Ehlers and
Koiller \cite{EK}, see also Borisov and Mamaev \cite{BM2, BM})
\begin{equation}
\frac{d}{dt}{\vec m}=\vec m\times \vec \omega+\lambda\vec\gamma,
\qquad \frac{d}{dt}{\vec\gamma}=\epsilon
\vec\gamma\times\vec\omega, \qquad (\vec\omega,\vec\gamma)=0,
\label{rubber}\end{equation} where $ \vec m =(\mathbb I+D\mathbb
E)\vec\omega=\mathbf I\vec\omega$, $\lambda=(\vec m,\mathbf
I^{-1}\vec\gamma)/(\vec\gamma,\mathbf I^{-1}\vec\gamma)$. Here the
constraint $(\vec\omega,\vec\gamma)=0$ model the "rubber" property
of the ball: the rotations of the ball around the normal to the
spherical surface at the contact point are forbidden.

Note that the above system is well defined on the whole phase
space $(\vec{m}, \vec\gamma)$, and $(\vec\omega,\vec\gamma)=const$
denotes its first integral. In the space $(\vec{m}, \vec\gamma)$,
or $(\vec{\omega},\vec{\gamma})$, the system has an invariant
measure with the density $\mu_\epsilon$ given by
\begin{equation}\label{mu-rubber}
\mu_\epsilon=(\mathbf
I^{-1}\vec\gamma,\vec\gamma)^\frac{1}{2\epsilon},
\end{equation}
see \cite{VeVe2, EKR} for $\epsilon=1$, and \cite{EK} for
$\epsilon \ne 1$.

The existence of an invariant measure for some nonholonomic
systems is an important property related to the geometry of the
problem, its Hamiltonization, as well as to the possibility to
solve the problem by quadratures (e.g., see \cite{AKN, BN, BBM,
EKR, FeJo, FNM, Ko, ZB}).

We shall prove that the densities \eqref{mu-ch} and
\eqref{mu-rubber} are particular cases of the densities of
invariant measures of $\epsilon$--modified L+R and LR systems,
respectively (see Theorems 1, 3 and 5, and Examples 1 and 2).
Further, as a specific example, we give the expression of an
invariant measure for the modified Veselova nonholonomic rigid
body problem (Theorem 4), considered as a dynamical system on the
product of the Lie algebra $so(n)$ with the Stiefel variety
$V_{n,r}$, as well as the expression of an invariant measure for
the associated $\epsilon$L+R system on $so(n)\times V_{n,r}$
(Theorem 6). For $r=1$, these systems represent natural
multidimensional generalizations of the equations \eqref{Chap} and
\eqref{rubber}.

\section{LR and L+R systems}

\subsection {LR systems}
Let $G$ be a $n$--dimensional compact connected Lie group $G$,
$\mathfrak{g}=Lie(G)$ its Lie algebra, $\langle \cdot,\cdot
\rangle$ an $\Ad_G$-invariant scalar product on $\mathfrak{g}$,
and let $(\cdot,\cdot)_\mathbb I$ be a left-invariant metric on
$G$ given by the positive definite operator (called inertia
operator) ${\mathbb I}: \mathfrak{g} \to \mathfrak{g}\cong
\mathfrak g^*$: $(\eta_1,\eta_2)_\mathbb I=\langle {\mathbb
I}(g^{-1}\eta_1),g^{-1}\eta_2\rangle$ $\eta_1,\eta_2\in T_g G$.
Here we identify $\mathfrak g$ and $\mathfrak g^*$ by $\langle
\cdot,\cdot \rangle$.

The \textit{LR system} on $G$ is a nonholonomic Lagrangian system
$(G,L,\mathcal D)$, where $L=\frac12(\dot g,\dot g)_\mathbb I$ is
a left-invariant Lagrangian  and $\mathcal D$ is a right-invariant
nonintegrable distribution on the tangent bundle $TG$, determined
by its restriction $\mathfrak{d}$ to the Lie algebra \cite{VeVe2}.
Let ${\mathfrak h}$ be the orthogonal complement of $\mathfrak{d}$
with respect to $\langle \cdot,\cdot \rangle$. Then the
right-invariant constraints, in the left-trivialization of $TG$,
can be written as $\omega\in \Ad_{g^{-1}}\mathfrak d$, or $\langle
\omega, \Ad_{g^{-1}} \mathfrak{h}\rangle=0$, where
$\omega=g^{-1}\cdot \dot g$ is the angular velocity.

The LR system $(G,L,\mathcal D)$ is described by the
Euler--Poincar\'e--Chetayev equations on $T^*G(m,g)$ (or
$TG(\omega,g)$)
\begin{equation} \label{EPS}
\dot m = [m,\omega]+\sum^{k}_{s=1}\lambda^{s}\, e_s\ , \qquad \dot
g = g \omega,
\end{equation}
where $m=\mathbb I\omega\in\mathfrak g^*\cong \mathfrak g$ is the
angular momentum, $e_s=g^{-1}\cdot E_s\cdot g$, $E_1,\dots,E_k$ is
the orthonormal base of $\mathfrak h$, and $\lambda^{s}$ are
Lagrange multipliers which can be found by differentiating the
constraints $\phi_s=\langle \omega, e_s \rangle=0$, $s=1,\dots,k$.
These equations define a dynamical system on the whole cotangent
bundle $T^*G$, and the right-invariant constraint functions
$\phi_s$
 are its first integrals.
Also, to the system (\ref{EPS}) we can associate a closed system
on  $\mathfrak g^{m+1}(m, {e}_1,\dots, {e}_k)$:
\begin{equation} \label{EPS1}
\dot m
 = [m,\omega]+\sum^{k}_{s=1}\lambda^{s}{e}_s, \qquad \dot {e}_s  = [e_s,\omega].
\end{equation}

 The LR system (\ref{EPS1}) possesses an invariant measure
$\mu\,{\mathrm d}m\wedge {\mathrm d}e_1 \wedge \dots \wedge
{\mathrm d}e_k$
 with density (see \cite{VeVe2})
\begin{equation} \label{measure1}
\mu=\sqrt{ \det \langle{e_s},{\mathbb I}^{-1} {e_l}\rangle },
\qquad s,l=1,\dots, k,
\end{equation}
implying that the original system \eqref{EPS} has an invariant
measure $\sqrt{ \det ( {\mathbb I}^{-1} |_{g^{-1} \mathfrak{h} g}
) }\,\varOmega^n$. Here ${\mathbb I}^{-1}|_{g^{-1} \mathfrak{h}
g}$ is the restriction of the inverse inertia tensor to the linear
space $g^{-1}\mathfrak{h} g \subset {\mathfrak g}$, by  $\mathrm d
m$, $\mathrm d e_i$ we denoted the standard measures on $\g$ with
respect to the metric $\langle\cdot,\cdot\rangle$, and $\varOmega$
is the standard symplectic form on $T^*G$.

\subsection{L+R systems}
In addition to the inertia operator ${\mathbb I}$ defining the
left-invariant metric $(\cdot,\cdot)_{{\mathbb I}}$, introduce a
constant linear operator $\Pi^{0}: \g\rightarrow \g$ defining a
right-invariant scalar product $(\cdot,\cdot)_{\Pi}$ on $G$:
$$
(\eta_1,\eta_2)_{\Pi }=\langle \Pi ^{0} \eta_1 g^{-1}, \eta_2
g^{-1}\rangle=\langle \Pi_g \eta_1 g^{-1}, \eta_2 g^{-1}\rangle,
\qquad \Pi_g=\Ad_{g^{-1}}\circ \Pi^0\circ\Ad_g,
$$
$\eta_1,\eta_2\in T_{g} G$. We suppose that ${\kappa}_g= {\mathbb
I}+\Pi_g$ is nondegenerate and positive definite on the whole
group $G$. The {\it L+R system} on $G$ is defined as a dynamical
system
\begin{equation}
\frac{d}{dt}\left( \mathbb I\omega+\Pi_g\omega\right)= [\mathbb
I\omega+\Pi_g\omega, \omega], \qquad \dot g=g\cdot \omega.
\label{L+R}
\end{equation}
This is the the modification of the geodesic motion on the group
$G$ with respect to the metric $(\cdot,\cdot)_\mathbb
I+(\cdot,\cdot)_\Pi$, by rejecting the term $[\omega,\Pi_g\omega]$
at the right hand side of the first equation in \eqref{L+R}.

In the view of the definition of $\Pi_g$, its evolution is given
by the  matrix equation $\dot\Pi_g=\Pi_g\circ
\ad_\omega+\ad_\omega^T\circ\Pi_g$. Note that for compact group we
have $\ad_\omega^T=-\ad_\omega$, and therefore, we get a closed
system
\begin{equation}
\frac{d}{dt}\left( \mathbb I\omega+\Pi\omega\right)= [\mathbb
I\omega+\Pi\omega, \omega], \qquad \dot\Pi=[\Pi,\ad_\omega]
\label{L+R2}
\end{equation}
on $\mathfrak g\times {\bf Sym}(\mathfrak g)$, where ${\bf
Sym}(\mathfrak g)$ is the space of of symmetric operators on $\g$,
which we also refer as a L+R system. It possesses the kinetic
energy integral $\frac 12\langle\mathbb
I\omega+\Pi\omega,\omega\rangle$ and an invariant measure
$\mu\,{\mathrm d}\omega \wedge {\mathrm d}\Pi$
 with density
\begin{equation*}
\mu=\sqrt{\det ({\mathbb I}+\Pi)},
 \end{equation*} where $\mathrm
d\Pi$ is the standard measure on  ${\bf Sym}(\g)$ \cite{FeRCD}. It
appears that every L+R system can be seen as a reduced system of a
certain LR system on a direct product $G\times \g^n$, where $\g$
is considered as a commutative group (see Theorem 3.3 in
\cite{Jo3}).


\section{Modified LR systems}

As the sphere--sphere problems \eqref{Chap}, \eqref{rubber}
suggest, it is natural to consider modifications of the equations
\eqref{EPS1} and \eqref{L+R2}. We define the {\it $\epsilon$LR
system} on the space
$$
\mathfrak g^{k+1}(\omega,e_1,\dots,e_k),\qquad \text{or} \qquad
\g^{k+1}(m,e_1,\dots,e_k),
$$
by the equations
\begin{eqnarray} \label{mLR}
\dot m &=&   [m,\omega]+\Lambda, \qquad
\Lambda=\sum^{k}_{i=1}\lambda^{i}\, e_i, \qquad
m=\mathbb I\,\omega, \\
 \dot e_i &=& \epsilon [e_i,\omega],\label{mLR2}
\end{eqnarray}
$i=1,\dots,k$.  The term $\Lambda$ can be interpreted as a
reaction force and it is determined from the condition that the
trajectories $(\omega(t),e_1(t),\dots,e_k(t))$ satisfy constraints
\begin{equation}\label{veze}
\phi_i=\langle \omega, e_i \rangle=c_i=const, \qquad i=1,\dots,k.
\end{equation}
By differentiating \eqref{veze}, we obtain the linear system
\begin{equation}\label{racun}
\epsilon \langle [e_i,\omega],\omega\rangle+\langle e_i,\mathbb
I^{-1}[m,\omega]\rangle+\sum_j\lambda^j\mathbb A_{ij}=0, \qquad
\mathbb A_{ij}=\langle e_i, \mathbb I^{-1} e_j\rangle,
\end{equation}
and we get the Lagrange multipliers in the form
\begin{equation}\label{m-lambde}
\lambda^i=-\sum_j \langle e_j,\mathbb I^{-1}[m,\omega]\rangle
\mathbb A^{ij}.
\end{equation}
Here $\mathbb A^{ij}$ is the inverse of the matrix $\mathbb
A_{ij}=\langle \mathbb I^{-1} e_i,e_j\rangle$, $i,j=1,\dots,k$. In
particular, $\Lambda$ does not depend on $\epsilon$.

\begin{thm}\label{TLR}
The $\epsilon$LR system \eqref{mLR}, \eqref{mLR2},
\eqref{m-lambde} has an invariant measure
\begin{equation}\label{emera}
\mu_\epsilon\,\mathrm d m\wedge {\mathrm d}e_1 \wedge \dots \wedge
{\mathrm d}e_k, \quad \text{i.e.,} \quad \mu_\epsilon\,\mathrm
d\omega\wedge {\mathrm d}e_1 \wedge \dots \wedge {\mathrm d}e_k
\end{equation}
where the density is given by
\begin{equation}\label{gustina}
\mu_\epsilon=(\Delta)^{\frac{1}{2\epsilon}}\, ,
\qquad\Delta=\det(\mathbb A_{ij})=\det(\langle \mathbb I^{-1}
e_i,e_j\rangle)\, , \qquad i,j=1,\dots,k.
\end{equation}
\end{thm}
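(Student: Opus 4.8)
The plan is to verify Liouville's equation directly: the volume form $\mu_\epsilon\,\mathrm dm\wedge\mathrm de_1\wedge\dots\wedge\mathrm de_k$ is invariant for the vector field $X$ defined by \eqref{mLR}, \eqref{mLR2} and the multipliers \eqref{m-lambde} if and only if $\operatorname{div}X+\frac{d}{dt}\log\mu_\epsilon=0$ along trajectories, where the divergence is taken with respect to the standard measure $\mathrm dm\wedge\mathrm de_1\wedge\dots\wedge\mathrm de_k$ attached to $\langle\cdot,\cdot\rangle$. So I would compute the two terms separately and show they cancel. Throughout I would fix an orthonormal basis $f_1,\dots,f_n$ of $\g$ and use two structural facts: that $\g$ is unimodular, so $\tr\ad_\xi=0$ for every $\xi$ (since $G$ is compact), and that $\langle\cdot,\cdot\rangle$ is $\Ad$-invariant, so $\langle[\xi,\eta],\zeta\rangle=-\langle\eta,[\xi,\zeta]\rangle$.

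First I would compute $\operatorname{div}X$. Equation \eqref{mLR2} gives $\dot e_i=\epsilon[e_i,\omega]=-\epsilon\ad_\omega e_i$, and since $\omega=\mathbb I^{-1}m$ is independent of the $e$'s, each such block contributes $-\epsilon\tr\ad_\omega=0$. In the $m$-direction the bracket $[m,\mathbb I^{-1}m]$ has $m$-derivative $\delta m\mapsto[\delta m,\omega]+[m,\mathbb I^{-1}\delta m]$, whose trace is $\tr(-\ad_\omega)+\tr(\ad_m\circ\mathbb I^{-1})$; the first term vanishes by unimodularity and the second because $\sum_a[\mathbb I^{-1}f_a,f_a]=0$ ($\mathbb I^{-1}$ symmetric, the bracket antisymmetric). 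Hence only the reaction term survives, and as $\lambda^i$ depends on $m$ while the $e_i$ are fixed in this derivative,
\[
\operatorname{div}X=\sum_i\langle e_i,\nabla_m\lambda^i\rangle .
\]

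The heart of the argument is then to evaluate $\nabla_m\lambda^i$. Writing $u_j=\mathbb I^{-1}e_j$ and using $\Ad$-invariance to rewrite $\langle e_j,\mathbb I^{-1}[m,\omega]\rangle=\langle m,[\mathbb I^{-1}m,u_j]\rangle$, differentiation in $m$ yields $\nabla_m\lambda^i=-\sum_j\mathbb A^{ij}\bigl([\omega,u_j]-\mathbb I^{-1}[m,u_j]\bigr)$. Substituting this into the divergence, the term from $\mathbb I^{-1}[m,u_j]$ produces $\sum_{i,j}\mathbb A^{ij}\langle[m,u_i],u_j\rangle$, which vanishes because $\langle[m,u_i],u_j\rangle$ is antisymmetric in $i,j$ by $\Ad$-invariance while $\mathbb A^{ij}$ is symmetric. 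What remains, after one more use of $\Ad$-invariance, is $\operatorname{div}X=\sum_{i,j}\mathbb A^{ij}\langle[\omega,e_i],u_j\rangle$.

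Finally I would compute the logarithmic derivative of the density. By Jacobi's formula $\frac{d}{dt}\log\mu_\epsilon=\frac{1}{2\epsilon}\tr(\mathbb A^{-1}\dot{\mathbb A})$; differentiating $\mathbb A_{ij}=\langle\mathbb I^{-1}e_i,e_j\rangle$ along \eqref{mLR2} and using the symmetry of $\mathbb A$ collapses the two resulting terms into one, cancelling the factor $2\epsilon$ and giving $\frac{d}{dt}\log\mu_\epsilon=-\sum_{i,j}\mathbb A^{ij}\langle[\omega,e_i],u_j\rangle$, which is exactly $-\operatorname{div}X$. I expect the main obstacle to be the bookkeeping of the previous paragraph: recognizing the antisymmetry that annihilates the $\mathbb I^{-1}[m,u_j]$ contribution, and then matching the surviving reaction-force divergence, which is manifestly independent of $\epsilon$, against the $\epsilon$-weighted logarithmic derivative of $\Delta$ — the exponent $1/(2\epsilon)$ being precisely what absorbs the factor $\epsilon$ that enters $\dot{\mathbb A}$ through \eqref{mLR2}.
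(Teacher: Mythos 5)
Your proposal is correct and follows essentially the same route as the paper: verify the Liouville equation, observe that the $e_i$-blocks and the Euler term $[m,\omega]$ are divergence-free, compute the divergence of the reaction force $\Lambda$, and match it against $\frac{d}{dt}\log\mu_\epsilon$ obtained from Jacobi's formula, the exponent $1/(2\epsilon)$ absorbing the factor $\epsilon$ coming from $\dot e_i=\epsilon[e_i,\omega]$. The only difference is that you derive the identity $\mathrm{div}(\Lambda)=\sum_{i,j}\mathbb A^{ij}\langle[\omega,e_i],\mathbb I^{-1}e_j\rangle$ from scratch (correctly, via the antisymmetry that kills the $\mathbb I^{-1}[m,u_j]$ contribution), whereas the paper imports it as equation \eqref{div} from Theorem 1 of \cite{VeVe2}.
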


\begin{proof}
Consider the vector field
$$
X=(\dot m,\dot e_1,\dots,\dot
e_k)=([m,\omega]+\Lambda,\epsilon[e_1,\omega],\dots,\epsilon[e_k,\omega]).
$$
The
volume form \eqref{emera} is invariant with respect to the flow of
\eqref{mLR} if and only if
$$
\mathcal L_X\,\mu_\epsilon\,\mathrm d m\wedge {\mathrm d}e_1
\wedge \dots \wedge {\mathrm d}e_k=0,
$$
i.e., if $\mu_\epsilon$ satisfies the Liouville equation
\begin{equation}\label{uslov}
\mu_\epsilon\left(\mathrm{div}(\dot m)+\mathrm{div}(\dot
e_1)+\dots+\mathrm{div}(\dot e_k)\right)+\dot\mu_\epsilon=0,
\end{equation}
where $\mathrm{div}(\cdot)$ is the standard divergence on
$\mathfrak g$. It is clear that $\mathrm{div}(\dot e_i)=0$,
$i=1,\dots,k$. In Theorem 1 \cite{VeVe2},  it is proved that
\begin{equation}
\mathrm{div}(\dot m)=\mathrm{div}\left(\Lambda\right)=
-\sum_{i,j=1}^k \mathbb A^{ij}\langle [\mathbb I^{-1} e_i,\mathbb
I^{-1} m],e_j\rangle . \label{div}\end{equation}

For any regular matrix $\mathbb A$ we have the identity
\begin{equation}\label{ID}
\frac{d}{dt}\det \mathbb A=\det\mathbb A\tr(\mathbb
A^{-1}\dot{\mathbb A}).
\end{equation}
Thus,
\begin{eqnarray*}
\dot\mu_\epsilon&=&\frac{1}{2\epsilon}(\Delta)^{\frac{1}{2\epsilon}-1}\dot\Delta=
\frac{1}{2\epsilon}\mu_\epsilon\sum_{i,j=1}^k \mathbb A^{ij}\frac{d}{dt}{\mathbb A_{ji}}\\
&=&\frac{1}{2}\mu_\epsilon\sum_{i,j=1}^k \mathbb
A^{ij}\left(\langle\mathbb I^{-1}
e_i,[e_j,\omega]\rangle+\langle\mathbb I^{-1}
e_j,[e_i,\omega]\rangle\right)\\
&=&\mu_\epsilon\sum_{i,j=1}^k \mathbb A^{ij}\langle\mathbb I^{-1}
e_i,[e_j,\omega]\rangle,
\end{eqnarray*}
which together with \eqref{div} implies \eqref{uslov}. Since
$\mathrm d m=\det\mathbb I\cdot \mathrm d\omega=const\cdot\mathrm
d\omega$, if instead of $m$ we take the variable $\omega$, the
expression for an invariant measure remains the same.
\end{proof}

\subsection{Momentum equation}
As above, let $E_1,\dots,E_k$ be an orthonormal base of $\mathfrak
h$. Further, let $E_{k+1},\dots,E_n$ be an orthonormal base of
$\mathfrak d$ and $\mathcal O_{E_i}$ be the adjoint orbit of
$E_i$, $i=1,\dots,n$. It is clear that we can also consider the
$\epsilon$LR system \eqref{mLR}, \eqref{mLR2}, \eqref{m-lambde} on
the space
$$
\mathcal M=\{(\omega,e_1,\dots,e_n)\in \mathfrak g\times \mathcal
O_{E_1}\times \cdots \times \mathcal O_{E_n}\, \vert\, \langle
e_i,e_j \rangle=\delta_{ij}\},
$$
simply by taking $i=1,\dots,n$ in the equation \eqref{mLR2}. Then
it has an invariant measure
\begin{equation}\label{meraLR}
\mu_\epsilon\,\mathrm{d}\omega\wedge{\mathrm d}e_1 \wedge \dots
\wedge {\mathrm d}e_n \vert_\mathcal M.
\end{equation}
Moreover, on $\mathcal M$ we have well defined orthogonal
projections $\pr_\mathcal H$, $\pr_\mathcal D$ ($\pr_\mathcal
H+\pr_\mathcal D=\mathbb E$) onto the complementary subspaces of
$\g$:
$$
\mathcal H=\Span\{e_1,\dots,e_k\}, \qquad \mathcal
D=\Span\{e_{k+1},\dots,e_n\},
$$
which, according to \eqref{mLR2}, satisfy the equations
\begin{equation}\label{projektori}
\frac{d}{dt}\pr_{\mathcal D}=\epsilon [\pr_{\mathcal
D},\ad_\omega], \qquad \frac{d}{dt}\pr_{\mathcal H}=\epsilon
[\pr_{\mathcal H},\ad_\omega].
\end{equation}

Note that  with the above notation, we can express $\Lambda$ in
\eqref{mLR} as
$$
\Lambda=-\mathbb A^{-1}\pr_{\mathcal H} \mathbb I^{-1} [\mathbb
I\omega,\omega],
$$
where $\mathbb A^{-1}: \mathcal H\to \mathcal H$ is the inverse of
the mapping $\mathbb A=\mathbb I^{-1}\vert_\mathcal H=\pr_\mathcal
H\circ \mathbb I^{-1}: \mathcal H \to \mathcal H. $

Following \cite{FeJo}, let us introduce the momentum
$$
\mathbf m=\pr_{\mathcal D} \mathbb I\omega+\pr_{\mathcal
H}\omega=\mathbf J\omega, \qquad \mathbf J=\pr_{\mathcal D}
\mathbb I+\pr_{\mathcal H}=\mathbb E+\pr_{\mathcal D}(\mathbb
I-\mathbb E).
$$

Consider the space
$$
\mathcal N=\{(\mathbf m,e_{k+1},\dots,e_n)\in \mathfrak g\times
\mathcal O_{E_{k+1}}\times \cdots \times \mathcal O_{E_n}\,
\vert\, \langle e_i,e_j \rangle=\delta_{ij}\}.
$$

\begin{prop}
The $\epsilon$LR system on $\mathcal N$ has the following form
\begin{eqnarray}\label{mom-eq}
&&\dot{\mathbf m}=\epsilon [\mathbf
m,\omega]+(1-\epsilon)\pr_{\mathcal D}[\mathbb I\omega,\omega], \\
\nonumber && \dot e_i=\epsilon [e_i,\omega], \qquad i=k+1,\dots,n.
\end{eqnarray}
\end{prop}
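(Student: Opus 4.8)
The plan is to differentiate the decomposition $\mathbf m=\pr_{\mathcal D}m+\pr_{\mathcal H}\omega$ directly, treating the two summands separately and exploiting that the reaction force $\Lambda$ lies in $\mathcal H$. First I would handle the $\pr_{\mathcal H}\omega$ term using the constraints rather than the projector equation: since on $\mathcal N$ (as on $\mathcal M$) the vectors $e_1,\dots,e_k$ form an orthonormal basis of $\mathcal H$ and $\langle\omega,e_i\rangle=c_i$ are constant along the flow, we have $\pr_{\mathcal H}\omega=\sum_{i=1}^k c_i e_i$, whence $\frac{d}{dt}\pr_{\mathcal H}\omega=\epsilon\sum_{i=1}^k c_i[e_i,\omega]=\epsilon[\pr_{\mathcal H}\omega,\omega]$ by \eqref{mLR2}.

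For the $\pr_{\mathcal D}m$ term I would use the product rule together with the projector evolution \eqref{projektori}. Writing $\frac{d}{dt}(\pr_{\mathcal D}m)=\dot{\pr}_{\mathcal D}\,m+\pr_{\mathcal D}\dot m$, the first piece is $\epsilon[\pr_{\mathcal D},\ad_\omega]m$, which expands via $\ad_\omega x=[\omega,x]=-[x,\omega]$ to $\epsilon\big([\pr_{\mathcal D}m,\omega]-\pr_{\mathcal D}[m,\omega]\big)$. For the second piece, \eqref{mLR} gives $\pr_{\mathcal D}\dot m=\pr_{\mathcal D}[m,\omega]+\pr_{\mathcal D}\Lambda$, and the crucial simplification is that $\Lambda=\sum_{i=1}^k\lambda^i e_i\in\mathcal H$ forces $\pr_{\mathcal D}\Lambda=0$. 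Collecting the two pieces yields $\frac{d}{dt}(\pr_{\mathcal D}m)=(1-\epsilon)\pr_{\mathcal D}[m,\omega]+\epsilon[\pr_{\mathcal D}m,\omega]$.

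Adding the two contributions and recognizing $[\pr_{\mathcal D}m,\omega]+[\pr_{\mathcal H}\omega,\omega]=[\mathbf m,\omega]$ gives $\dot{\mathbf m}=\epsilon[\mathbf m,\omega]+(1-\epsilon)\pr_{\mathcal D}[m,\omega]$, and substituting $m=\mathbb I\omega$ reproduces the first line of \eqref{mom-eq}; the equations $\dot e_i=\epsilon[e_i,\omega]$ for $i=k+1,\dots,n$ are inherited unchanged from \eqref{mLR2}.

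I do not expect a genuine obstacle here, since the statement is essentially a bookkeeping reorganization of \eqref{mLR}--\eqref{mLR2}. The one point that needs care is the bracket/sign convention relating $\ad_\omega$ to $[\cdot,\omega]$, which must be tracked consistently through the operator commutator $[\pr_{\mathcal D},\ad_\omega]$ so that the two $\pr_{\mathcal D}[m,\omega]$ contributions combine with coefficient $1-\epsilon$ rather than canceling. It is also worth noting that one could bypass the constraints and treat $\pr_{\mathcal H}\omega$ by the same product-rule method; in that case the term $\pr_{\mathcal H}\dot\omega=\pr_{\mathcal H}\mathbb I^{-1}(\,[m,\omega]+\Lambda\,)$ vanishes precisely because $\Lambda=-\mathbb A^{-1}\pr_{\mathcal H}\mathbb I^{-1}[m,\omega]$ with $\mathbb A=\pr_{\mathcal H}\circ\mathbb I^{-1}|_{\mathcal H}$, which furnishes an independent check that the reaction force is exactly what decouples $\pr_{\mathcal H}\omega$.
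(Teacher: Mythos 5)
Your argument is correct and follows essentially the same route as the paper: differentiate $\mathbf m=\pr_{\mathcal D}\mathbb I\omega+\pr_{\mathcal H}\omega$ using the projector evolution \eqref{projektori}, kill $\pr_{\mathcal D}\Lambda$ because $\Lambda\in\mathcal H$, and decouple the $\mathcal H$-part via the constraints (the paper phrases this as $\pr_{\mathcal H}\dot\omega=0$ in \eqref{HO}, you as $\pr_{\mathcal H}\omega=\sum_i c_ie_i$ with constant $c_i$ --- the same fact). The only item the paper adds that you omit is the closing remark that $\mathbf J$ is invertible, so $\omega=\mathbf J^{-1}\mathbf m$ and \eqref{mom-eq} is genuinely a closed system on $\mathcal N$.
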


\begin{proof}
From the equations \eqref{mLR} and \eqref{projektori}, the
evolution of the momentum $\mathbf m$ reads
\begin{eqnarray*}
\dot{\mathbf m} &=& \frac{d}{dt}\left(\pr_{\mathcal D} \mathbb
I\omega+\pr_{\mathcal
H}\omega\right)\\
&=&\epsilon\left(\pr_{\mathcal D}[\omega,\mathbb
I\omega]-[\omega,\pr_{\mathcal D} \mathbb
I\omega]\right)+\pr_{\mathcal D}\left([\mathbb
I\omega,\omega]+\Lambda\right)\\
&&\quad +\epsilon\left(\pr_{\mathcal
H}[\omega,\omega]-[\omega,\pr_{\mathcal H}
\omega]\right)+\pr_{\mathcal H}\dot\omega\\
&=& \epsilon [\mathbf m,\omega]+(1-\epsilon)\pr_{\mathcal
D}[\mathbb I\omega,\omega]+\pr_{\mathcal H}\dot\omega\,.
\end{eqnarray*}

On the other hand, from \eqref{veze} it follows: $\epsilon \langle
[e_i,\omega],\omega\rangle+\langle e_i,\dot\omega\rangle=\langle
e_i,\dot\omega\rangle=0$, $i=1,\dots,k$. Whence
\begin{equation}\label{HO}
\pr_{\mathcal H}\dot\omega =\sum_{i=1}^k \langle
\dot\omega,e_i\rangle e_i=0.
\end{equation}

Finally note, since the linear operator $\mathbf J$ is invertible,
$\omega=\mathbf J^{-1}\mathbf m$ and we have a closed system
\eqref{mom-eq} on $\mathcal N$.
\end{proof}

In the case $\epsilon=1$, the first equation in \eqref{mom-eq}
reduces to the momentum equation (2.7) of \cite{FeJo}. Also, as in
Theorem 3.2 \cite{FeJo}, we have
\begin{equation}\label{det}
\det\mathbb I\cdot \det\mathbb A=\det\mathbb I\cdot \det(\mathbb
I^{-1}\vert_\mathcal H)=\det\mathbf J=\det(\mathbb I\vert_\mathcal
D),
\end{equation}
where $\mathbb I\vert_\mathcal D=\pr_\mathcal D\circ \mathbb I: \,
\mathcal D \to \mathcal D$. Therefore:
\begin{eqnarray*}
\mu_\epsilon\, {\mathrm d}\omega &=&
\mu_\epsilon\det\frac{\partial\omega}{\partial\mathbf m}\,{\mathrm
d}\mathbf m\\
&=&\left(\det(\mathbb I^{-1}\vert_\mathcal
H)\right)^{\frac{1}{2\epsilon}}(\det\mathbf J)^{-1}{\mathrm
d}{\mathbf m}\\
&=&(\det\mathbb I)^{-\frac{1}{2\epsilon}}(\det\mathbb
I\vert_\mathcal D)^{\frac{1}{2\epsilon}-1}{\mathrm d}\mathbf m\,.
\end{eqnarray*}

Combining the expression of the invariant measure \eqref{meraLR}
on the space $\mathcal M$ and the above identity, we get the
following statement:

\begin{thm}\label{TLR2}
The $\epsilon$LR system \eqref{mom-eq} has an invariant measure
\begin{equation*}
\tilde \mu_\epsilon\,{\mathrm d}{\mathbf m}\wedge {\mathrm
d}e_{k+1} \wedge \dots \wedge {\mathrm d}e_n\vert_\mathcal N,
\end{equation*}
where the density is given by
\begin{equation*}
\tilde{\mu}_\epsilon=(\det\mathbb I\vert_\mathcal
D)^{\frac{1}{2\epsilon}-1}=(\det(\langle \mathbb I\,
e_i,e_j\rangle))^{\frac{1}{2\epsilon}-1}, \qquad i,j=k+1,\dots,n.
\end{equation*}
\end{thm}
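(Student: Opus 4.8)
The plan is to deduce the statement from the invariant measure \eqref{meraLR} of the $\epsilon$LR system on the larger space $\mathcal M$, which is available by Theorem \ref{TLR}, by transporting it along the momentum reduction $\Phi\colon\mathcal M\to\mathcal N$, $(\omega,e_1,\dots,e_n)\mapsto(\mathbf m,e_{k+1},\dots,e_n)$ with $\mathbf m=\mathbf J\omega$. The preceding Proposition already shows that $\Phi$ intertwines the flow on $\mathcal M$ with the closed system \eqref{mom-eq} on $\mathcal N$, so it remains only to track how the volume form transforms under $\Phi$ and to verify that the forgotten directions $e_1,\dots,e_k$ contribute consistently.

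The first ingredient is purely linear-algebraic. Splitting $\g=\mathcal H\oplus\mathcal D$ orthogonally and writing $\mathbb I$ in block form with respect to this splitting, a Schur-complement computation yields the identity \eqref{det}, namely $\det\mathbf J=\det(\mathbb I\vert_{\mathcal D})=\det\mathbb I\cdot\det(\mathbb I^{-1}\vert_{\mathcal H})$; in particular $\mathbf J$ is invertible with positive determinant. Since for a fixed frame the assignment $\omega\mapsto\mathbf m=\mathbf J\omega$ is linear, the substitution $(\omega,e_1,\dots,e_n)\mapsto(\mathbf m,e_1,\dots,e_n)$ has block–triangular Jacobian of determinant $\det\mathbf J$, so that $\mathrm d\omega=(\det\mathbf J)^{-1}\,\mathrm d\mathbf m$ with the factors $\mathrm d e_i$ unaffected.

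Next I would insert the density $\mu_\epsilon=(\det(\mathbb I^{-1}\vert_{\mathcal H}))^{\frac{1}{2\epsilon}}$ from \eqref{gustina} into the product $\mu_\epsilon\,\mathrm d\omega$ and use \eqref{det} to express both $\det(\mathbb I^{-1}\vert_{\mathcal H})$ and $\det\mathbf J$ through $\det(\mathbb I\vert_{\mathcal D})$. This yields
\[
\mu_\epsilon\,\mathrm d\omega=(\det\mathbb I)^{-\frac{1}{2\epsilon}}\,(\det(\mathbb I\vert_{\mathcal D}))^{\frac{1}{2\epsilon}-1}\,\mathrm d\mathbf m,
\]
and since $\det\mathbb I$ is constant the factor $(\det\mathbb I)^{-\frac{1}{2\epsilon}}$ may be discarded, leaving exactly the claimed density $\tilde\mu_\epsilon=(\det(\mathbb I\vert_{\mathcal D}))^{\frac{1}{2\epsilon}-1}$ attached to $\mathrm d\mathbf m$.

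The step requiring the most care, and the one I expect to be the main obstacle, is the descent from $\mathcal M$ to $\mathcal N$: one must justify that forgetting $e_1,\dots,e_k$ is legitimate rather than treating $\mathrm d e_1\wedge\dots\wedge\mathrm d e_k$ as a formal bystander. The key observation is that $\mathcal H=\mathcal D^{\perp}$ is already determined by $e_{k+1},\dots,e_n$, so the fibre of $\Phi$ over a point of $\mathcal N$ is the compact manifold of orthonormal frames $(e_1,\dots,e_k)$ of $\mathcal H$ compatible with the orbit constraints, with $\omega=\mathbf J^{-1}\mathbf m$ fixed. The flow preserves orthonormality, because $\ad_\omega$ is skew for the $\Ad$-invariant product, and carries fibres to fibres by orthogonal motions, so the natural frame measure along the fibres is invariant and of constant total volume as the base point varies. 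Fibre integration of the invariant form \eqref{meraLR} then produces, up to this constant fibre volume, the form $\tilde\mu_\epsilon\,\mathrm d\mathbf m\wedge\mathrm d e_{k+1}\wedge\dots\wedge\mathrm d e_n\vert_{\mathcal N}$, which is invariant for \eqref{mom-eq} precisely because $\Phi$ intertwines the flows; making this fibre-integration argument rigorous is the only genuinely delicate point.
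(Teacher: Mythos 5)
Your proposal is correct and follows essentially the same route as the paper: the invariant measure \eqref{meraLR} on $\mathcal M$ is pushed down via the linear change of variables $\omega\mapsto\mathbf m=\mathbf J\omega$, using the determinant identity \eqref{det} to rewrite the density as $(\det\mathbb I)^{-\frac{1}{2\epsilon}}(\det\mathbb I\vert_{\mathcal D})^{\frac{1}{2\epsilon}-1}$ and discarding the constant factor. The only difference is that you make explicit the fibre integration over the forgotten frame variables $e_1,\dots,e_k$, a step the paper leaves implicit when it simply ``combines'' the measure on $\mathcal M$ with the Jacobian identity; your justification of that step is sound.
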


\begin{example}{\rm Let us consider
the case when $\mathcal H$ is the isotropy algebra of
$\gamma=e_1$: $\mathcal H=\{x\in\g\,\vert\,[x,\gamma]=0\}$. Then
$\mathcal D$ can be identified with the tangent plane $T_\gamma
\mathcal O$ of the adjoint orbit $\mathcal O$ through $\gamma$.
Since $\mathcal H$ and $\mathcal D$ are uniquely determined by
$\gamma$, we can write the closed system in variables $(\mathbf
m,\gamma)$ or $(\omega,\gamma)$:
\begin{eqnarray}\label{mom-eq*}
&&\dot{\mathbf m}=\epsilon [\mathbf
m,\omega]+(1-\epsilon)\pr_{\mathcal D}[\mathbb
I\omega,\omega],\qquad \dot \gamma=\epsilon [\gamma,\omega],\\
&&\nonumber \mathbf m=\pr_{\mathcal D} \mathbb
I\omega+\pr_{\mathcal H}\omega.
\end{eqnarray}
 Also, in the special case when $\mathcal H$ is
one dimensional, spanned by $\gamma=e_1$, we have the equations
\begin{eqnarray}\label{mLR*}
&&\dot m = [m,\omega]+\lambda\gamma,\qquad
 \dot \gamma =\epsilon [\gamma,\omega],\\
 &&\nonumber\lambda=-{\langle \gamma,\mathbb I^{-1}[m,\omega]\rangle}/{\langle
\mathbb I^{-1}\gamma,\gamma\rangle},\qquad m=\mathbb I\omega.
\end{eqnarray}

The above examples coincide in the case of the Lie algebra
$so(3)$. Under the usual isomorphism between the Euclidian space
${\mathbb R}^3$ and $so(3)$
\begin{equation}
\vec X=(X_1,X_2,X_3)\longmapsto X=\left(\begin{matrix}
0 & -X_3 & X_2 \\
X_3 & 0 & -X_1 \\
-X_2 & X_1 & 0
\end{matrix}\right), \label{iso}
\end{equation}
replacing the inertia operator $\mathbb I$ by $\mathbf I=\mathbb
I+D\mathbb E$, the equations \eqref{mLR*} recover the equations
\eqref{rubber} of the rubber Chaplygin ball, while the expression
for the density of the invariant measure \eqref{gustina} recovers
the density \eqref{mu-rubber}. Also, according to \eqref{mom-eq*},
we can write the rubber Chaplygin ball equations in the equivalent
form
\begin{eqnarray}\label{mom-eq**}
\frac{d}{dt}{\vec{\mathbf m}}&=&\epsilon \vec{\mathbf m}
\times\vec\omega+(1-\epsilon)\left(\mathbf
I\vec\omega\times\vec\omega-(\mathbf I\vec\omega\times\vec\omega,\vec\gamma)\vec\gamma\right),\qquad \frac{d}{dt}{\vec\gamma}=\epsilon \vec\gamma\times\vec\omega,\\
\nonumber \vec{\mathbf m}&=& \mathbf
I\vec\omega+(\vec\gamma,\omega-\mathbf I\vec\omega)\vec\gamma.
\end{eqnarray}
}\end{example}

\subsection{Modified Veselova problem}\label{MVS} Following \cite{FeJo}, we define $\epsilon$--modified
Veselova nonholonomic rigid body problem as follows. Let $\mathbf
e_1,\dots,\mathbf e_n$ be a (moving) orthonormal frame of the
Euclidean space $(\mathbb R^n, (\cdot,\cdot))$. Consider the
orthogonal decomposition
\begin{equation}\label{ort-so(n)}
so(n)=\mathcal H_r\oplus\mathcal D_r,
\end{equation}
\begin{equation*}
\mathcal H_r=\Span\{\mathbf e_p\wedge \mathbf e_q\,\vert\,r<p<q\le
n\}, \,\, \mathcal D_r=\Span\{\mathbf e_i\wedge \mathbf
e_j\,\vert\,1\le i\le r, 1\le j \le n\}.
\end{equation*}
 Then
$$
\pr_{\mathcal D_r}\eta=\Gamma \eta+\eta\Gamma-\Gamma\eta\Gamma,
\qquad \eta\in so(n),
$$
where
$$
\Gamma=\mathbf e_1\otimes \mathbf e_1+\dots+\mathbf e_r\otimes
\mathbf e_r.
$$

The above projection depends only on the point $(\mathbf
e_1,\dots,\mathbf e_r)$ of the Stiefel variety $V_{n,r}$, realized
as a submanifold of $\mathbb R^{nr}(\mathbf e_1,\dots,\mathbf
e_r)$ by the constraints
$$
V_{n,r}: \qquad (\mathbf e_i,\mathbf e_j)=\delta_{ij}, \qquad 1\le
i,j \le r.
$$
Therefore, instead of using the variables $\mathbf m, \mathbf
e_i\wedge \mathbf e_j$, $1\le i\le r, 1\le j \le n$, we can write
the equations \eqref{mom-eq} on
$$
so(n)\times V_{n,r}\,(\mathbf m,\mathbf e_1,\dots,\mathbf e_r)
$$
as
follows:
\begin{eqnarray}
\nonumber&&\dot{\mathbf m}=\epsilon [\mathbf
m,\omega]+(1-\epsilon)\left(\Gamma[\mathbb
I\omega,\omega]+[\mathbb I\omega,\omega]\Gamma-\Gamma[\mathbb
I\omega,\omega]\Gamma\right),\\
\label{veselova}&&\dot{\mathbf e}_i=-\epsilon \omega \mathbf e_i,
\qquad\qquad\qquad\qquad\qquad
i=1,\dots,r,\\
\nonumber&&\mathbf m=\omega+\Gamma(\mathbb
I\omega-\omega)+(\mathbb I\omega-\omega)\Gamma-\Gamma(\mathbb
I\omega-\omega)\Gamma.
\end{eqnarray}

Along the trajectory $(\mathbf m(t),\mathbf e_1(t),\dots,\mathbf
e_r(t))$ of the modified Veselova problem \eqref{veselova}, the
angular velocity matrix $\omega(t)$ has the form
\begin{equation*}
\omega(t)=
\begin{pmatrix}
0 & \cdots & \omega_{1r}(t) & \cdots & \omega_{1n}(t) \\
\vdots & \ddots & \vdots &  & \vdots \\
-\omega_{1r}(t) & \cdots & 0 & \cdots & \omega_{rn}(t) \\
\vdots &  & \vdots & C_r &  \\
- \omega_{1n}(t) & \cdots & - \omega_{rn}(t) &  &
\end{pmatrix},
\end{equation*}
where $\omega_{ij}(t)=\langle \mathbf e_i(t)\wedge \mathbf
e_j(t),\omega(t)\rangle$ and $C_r$ is a constant $(n-r)\times
(n-r)$ matrix.

Next, define the left-invariant metric via the relation
\begin{equation}\label{ves-op}
\mathbb I(\mathbf E_i\wedge \mathbf E_j)=a_ia_j\mathbf E_i\wedge
\mathbf E_j,
\end{equation}
where $\mathbf E_1,\dots,\mathbf E_n$ is the standard orthonormal
base of $\R^n$ and $A=\diag(a_1,\dots,a_n)$ is positive definite.

Then,  from Theorem \ref{TLR2} and Theorem 5.1 of \cite{FeJo}, we
get:

\begin{thm}
The $\epsilon$--modified Veselova system \eqref{veselova},
\eqref{ves-op} has an invariant measure
\begin{equation*}
\left(\sum_{I} a_{i_1} \cdots a_{i_r} (\mathbf e_1 \wedge \cdots
\wedge \mathbf e_r)^2_{I}
 \right)^{(\frac{1}{2\epsilon}-1)({n-r-1})}{\mathrm d}{\mathbf m}\wedge {\mathrm
d}\mathbf e_{1} \wedge \dots \wedge {\mathrm d}\mathbf
e_r\vert_{so(n)\times V_{n,r}},
\end{equation*}
where the summation is over all $r$-tuples  $I=\{1\le i_1 < \cdots
< i_r \le n\}$, and $(\mathbf e_1 \wedge \cdots \wedge \mathbf
e_r)_{I}$ are the {\it Pl\"ucker coordinates\/} of the $r$-form
$\mathbf e_1 \wedge \cdots \wedge \mathbf e_r$. In the case $r=1$,
the density is proportional to $(\mathbf e_1, A \mathbf
e_1)^{(\frac{1}{2\epsilon}-1)({n-2})}$.
\end{thm}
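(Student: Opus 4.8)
The plan is to deduce the final theorem directly from Theorem~\ref{TLR2}, whose conclusion gives the invariant-measure density $\tilde\mu_\epsilon=(\det\mathbb I\vert_{\mathcal D})^{\frac{1}{2\epsilon}-1}$ for the $\epsilon$LR system in momentum form. Since the modified Veselova system \eqref{veselova} is precisely the realization of \eqref{mom-eq} on $so(n)\times V_{n,r}$ via the decomposition \eqref{ort-so(n)}, the only task remaining is to compute $\det(\mathbb I\vert_{\mathcal D_r})$ explicitly for the inertia operator \eqref{ves-op} and to reconcile the resulting measure on $\mathcal N$ (expressed in the orbit variables $e_{k+1},\dots,e_n$) with the measure on $so(n)\times V_{n,r}$ (expressed in the frame variables $\mathbf e_1,\dots,\mathbf e_r$). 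The exponent $\frac{1}{2\epsilon}-1$ carries over verbatim from Theorem~\ref{TLR2}; the extra factor $(n-r-1)$ in the exponent and the Pl\"ucker-coordinate expression must come entirely from this computation of $\det(\mathbb I\vert_{\mathcal D_r})$ together with the change-of-variables Jacobian.

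First I would identify $\mathcal D=\mathcal D_r$ as the span of the planes $\mathbf e_i\wedge\mathbf e_j$ with $1\le i\le r$, $1\le j\le n$, and compute the matrix $\langle\mathbb I\,e_i,e_j\rangle$ restricted to $\mathcal D_r$ using \eqref{ves-op}. Because $\mathbb I$ acts diagonally on the \emph{fixed} basis $\mathbf E_p\wedge\mathbf E_q$ with eigenvalue $a_pa_q$, but $\mathcal D_r$ is spanned by bivectors built from the \emph{moving} frame $\mathbf e_i$, the matrix of $\mathbb I\vert_{\mathcal D_r}$ in the moving basis is not diagonal, and its determinant becomes a function of the position $(\mathbf e_1,\dots,\mathbf e_r)\in V_{n,r}$. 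The crucial algebraic input here is exactly Theorem~5.1 of \cite{FeJo}, cited in the statement, which I would invoke to evaluate $\det(\mathbb I\vert_{\mathcal D_r})$ as $\big(\sum_I a_{i_1}\cdots a_{i_r}(\mathbf e_1\wedge\cdots\wedge\mathbf e_r)^2_I\big)^{\,n-r-1}$ up to a positive constant. Raising this to the power $\frac{1}{2\epsilon}-1$ reproduces the exponent $(\frac{1}{2\epsilon}-1)(n-r-1)$ in the claimed density.

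The step I expect to be the main obstacle is the change of variables relating the two measures: Theorem~\ref{TLR2} furnishes a density against $\mathrm d\mathbf m\wedge\mathrm de_{k+1}\wedge\cdots\wedge\mathrm de_n\vert_{\mathcal N}$, where the $e_j$ range over adjoint orbits in $so(n)$, whereas the final statement is written against $\mathrm d\mathbf m\wedge\mathrm d\mathbf e_1\wedge\cdots\wedge\mathrm d\mathbf e_r\vert_{so(n)\times V_{n,r}}$, parametrized by the frame vectors. I would need to check that the map sending $(\mathbf e_1,\dots,\mathbf e_r)\in V_{n,r}$ to the collection of bivectors $\{\mathbf e_i\wedge\mathbf e_j\}$ (equivalently, to the projector $\pr_{\mathcal D_r}$, which determines the relevant orbit point) pushes the standard volume $\mathrm d\mathbf e_1\wedge\cdots\wedge\mathrm d\mathbf e_r$ on $V_{n,r}$ to the induced volume on $\mathcal N$ with a Jacobian that is either constant or already absorbed into the Pl\"ucker expression. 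Here again Theorem~5.1 of \cite{FeJo} does the bookkeeping, since it was established for precisely the $\epsilon=1$ Veselova measure and the passage to general $\epsilon$ only alters the overall exponent. Finally, specializing to $r=1$, the sum $\sum_I a_{i}(\mathbf e_1)_i^2$ collapses to $(\mathbf e_1,A\mathbf e_1)$ and the exponent becomes $(\frac{1}{2\epsilon}-1)(n-2)$, giving the stated corollary.
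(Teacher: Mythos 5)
Your proposal is correct and follows exactly the route the paper takes: the paper derives this theorem in one line by combining Theorem~\ref{TLR2} with Theorem~5.1 of \cite{FeJo}, which supplies both the evaluation $\det(\mathbb I\vert_{\mathcal D_r})\sim\bigl(\sum_I a_{i_1}\cdots a_{i_r}(\mathbf e_1\wedge\cdots\wedge\mathbf e_r)^2_I\bigr)^{n-r-1}$ and the bookkeeping for passing from the orbit variables on $\mathcal N$ to the Stiefel variables on $so(n)\times V_{n,r}$. Your write-up simply makes explicit the steps the paper leaves implicit, including the correct origin of the exponent $(\frac{1}{2\epsilon}-1)(n-r-1)$ and the $r=1$ specialization to $(\mathbf e_1,A\mathbf e_1)^{(\frac{1}{2\epsilon}-1)(n-2)}$.
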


Here ${\mathrm d}\mathbf e_{1}\wedge \dots \wedge {\mathrm
d}\mathbf e_r$ is the standard volume form on $\R^{nr}$. Note
that, for the inertia operator $\mathbb I$ replaced by $\mathbf
I=\mathbb I+D\mathbb E$, $r=1$ and $n=3$, the equations
\eqref{veselova} give another form of the rubber Chaplygin ball
equations \eqref{mom-eq**}.

\section{Modified L+R systems}

The {\it $\epsilon$L+R system} on the space
$$
\mathfrak g\times {\bf Sym}(\g)(\omega,\Pi) \qquad \text{or}\qquad
\mathfrak g\times {\bf Sym}(\g)(\mathbf k,\Pi)
$$
is defined by
\begin{equation}
\dot{\mathbf k}= [\mathbf k, \omega], \qquad
\dot\Pi=\epsilon[\Pi,\ad_\omega], \qquad \mathbf k=\mathbb
I\omega+\Pi\omega. \label{mL+R}
\end{equation}

\begin{thm}\label{TL+R}
The $\epsilon$L+R system possesses  an invariant measure
$$
\mu\,{\mathrm d}\omega \wedge {\mathrm d}\Pi \qquad i.e., \qquad
\mu^{-1}\,{\mathrm d}\mathbf k \wedge {\mathrm d}\Pi,
$$
 with density same as in the case of the usual L+R systems:
\begin{equation}
\mu=\sqrt{\det ({\mathbb I}+\Pi)}\, \label{meraL+R},
\end{equation} where $\mathrm d\Pi$ is the standard measure on
${\bf Sym}(\g)$.
\end{thm}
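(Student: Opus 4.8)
The plan is to verify the Liouville equation directly, following the same strategy used in the proof of Theorem \ref{TLR}. I would consider the vector field $X=(\dot{\mathbf k},\dot\Pi)=([\mathbf k,\omega],\epsilon[\Pi,\ad_\omega])$ on $\g\times{\bf Sym}(\g)$ and compute $\mathcal L_X\big(\mu\,{\mathrm d}\omega\wedge{\mathrm d}\Pi\big)$, where I must be careful that the natural coordinate is $\omega$ rather than $\mathbf k$ (these are related by $\mathbf k=(\mathbb I+\Pi)\omega$, so the Jacobian of the change of variables will depend on $\Pi$ and will interact with the $\sqrt{\det(\mathbb I+\Pi)}$ factor). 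The measure is invariant if and only if the density satisfies
\begin{equation*}
\mu\big(\mathrm{div}_\omega(\dot\omega)+\mathrm{div}_\Pi(\dot\Pi)\big)+\dot\mu=0,
\end{equation*}
where the two divergences are taken with respect to the standard measures on $\g$ and on ${\bf Sym}(\g)$ respectively.

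First I would handle the $\Pi$-component. The equation $\dot\Pi=\epsilon[\Pi,\ad_\omega]$ is linear in $\Pi$ with $\omega$ treated as a parameter; since on a compact group $\ad_\omega^T=-\ad_\omega$, this flow is (up to the factor $\epsilon$) conjugation by a one-parameter group of orthogonal transformations acting on the symmetric operator $\Pi$, which is volume-preserving on ${\bf Sym}(\g)$. Hence I expect $\mathrm{div}_\Pi(\dot\Pi)=0$, exactly as the analogous term vanished for the $e_i$ variables in Theorem \ref{TLR}. The substantive content therefore lies in balancing $\mathrm{div}_\omega(\dot\omega)$ against $\dot\mu/\mu$. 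To get $\dot\omega$ I would differentiate $\mathbf k=(\mathbb I+\Pi)\omega$, using $\dot{\mathbf k}=[\mathbf k,\omega]$ and the $\Pi$-evolution, and solve for $\dot\omega=(\mathbb I+\Pi)^{-1}\big([\mathbf k,\omega]-\dot\Pi\,\omega\big)$; the two divergence terms can then be assembled.

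For the density itself I would apply the identity \eqref{ID} to $\kappa=\mathbb I+\Pi$, giving $\dot\mu/\mu=\tfrac12\tr\big(\kappa^{-1}\dot\Pi\big)=\tfrac{\epsilon}{2}\tr\big((\mathbb I+\Pi)^{-1}[\Pi,\ad_\omega]\big)$. The key trace computation, using cyclicity together with the antisymmetry $\ad_\omega^T=-\ad_\omega$, should produce exactly the term needed to cancel the $\omega$-divergence coming from the $\dot{\mathbf k}$ equation once the Jacobian factor relating ${\mathrm d}\omega$ and ${\mathrm d}\mathbf k$ is accounted for. I expect the main obstacle to be precisely this bookkeeping: correctly computing $\mathrm{div}_\omega(\dot\omega)$ when $\dot\omega$ is defined implicitly through the $\Pi$-dependent operator $\mathbb I+\Pi$, and then matching it term-by-term against $\dot\mu/\mu$. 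Alternatively, and perhaps more cleanly, I would invoke the fact quoted after \eqref{L+R2} that every L+R system arises as a reduction of an LR system on $G\times\g^n$; realizing the $\epsilon$L+R system as the corresponding reduction of an $\epsilon$LR system would let me deduce \eqref{meraL+R} from Theorem \ref{TLR} without the direct divergence calculation, and I would verify that the density \eqref{gustina} descends to $\sqrt{\det(\mathbb I+\Pi)}$ under this reduction, with the exponent $1/(2\epsilon)$ being absorbed consistently.
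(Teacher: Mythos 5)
Your strategy is exactly the paper's: pass to the coordinates $(\omega,\Pi)$, check the Liouville equation $\mu(\mathrm{div}_\omega(\dot\omega)+\mathrm{div}_\Pi(\dot\Pi))+\dot\mu=0$, observe $\mathrm{div}_\Pi(\dot\Pi)=0$, and use the identity \eqref{ID} to write $\dot\mu/\mu=\tfrac12\tr\bigl((\mathbb I+\Pi)^{-1}\dot\Pi\bigr)$. Your formula $\dot\omega=(\mathbb I+\Pi)^{-1}([\mathbf k,\omega]-\dot\Pi\,\omega)$ is also correct and agrees with the paper's \eqref{mL+R*}. But the proposal stops exactly where the mathematical content begins: you write that the trace computation ``should produce'' the cancelling term and that the bookkeeping of $\mathrm{div}_\omega(\dot\omega)$ is ``the main obstacle,'' without carrying it out. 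The paper's device for this step is worth knowing: set $\Omega_{ij}=\partial((\mathbb I+\Pi)\dot\omega)_i/\partial\omega_j$, so that $\mathrm{div}_\omega(\dot\omega)=\tr((\mathbb I+\Pi)^{-1}\Omega)$ with $\Omega=-\ad_\omega\circ\mathbb I+\ad_{\mathbb I\omega}+(1-\epsilon)(-\ad_\omega\circ\Pi+\ad_{\Pi\omega})$; since $(\mathbb I+\Pi)^{-1}$ is symmetric only the symmetric part $\Omega_+=\tfrac12\bigl((\mathbb I+\Pi)\circ\ad_\omega-\ad_\omega\circ(\mathbb I+\Pi)-\dot\Pi\bigr)$ contributes, and the first two terms drop out by cyclicity of the trace and $\tr\ad_\omega=0$, leaving precisely $-\tfrac12\tr((\mathbb I+\Pi)^{-1}\dot\Pi)=-\dot\mu/\mu$. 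Without some version of this symmetrization argument your plan is a correct setup but not a proof.

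Two further cautions. First, your phrase about accounting for ``the Jacobian factor relating ${\mathrm d}\omega$ and ${\mathrm d}\mathbf k$'' is a red herring once you have committed to the Liouville equation in the variables $(\omega,\Pi)$: the $\Pi$-dependence of $\mathbf k=(\mathbb I+\Pi)\omega$ enters only through the computation of $\mathrm{div}_\omega(\dot\omega)$ itself, and the passage to $\mu^{-1}{\mathrm d}\mathbf k\wedge{\mathrm d}\Pi$ is a separate, trivial change of variables at the end. Second, the alternative route via the reduction of LR systems on $G\times\g^n$ is more delicate than you suggest: Theorem 3.3 of \cite{Jo3} is stated for the unmodified ($\epsilon=1$) systems, and the densities do not match naively --- the $\epsilon$LR density \eqref{gustina} carries the exponent $1/(2\epsilon)$ while \eqref{meraL+R} has exponent $1/2$ independent of $\epsilon$ --- so ``absorbing the exponent consistently'' is exactly the point you would have to prove, not a formality. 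As it stands that alternative is speculative, and the direct divergence computation is the safer route.
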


\begin{proof} The proof is a modification of the corresponding
statement for L+R systems (see \cite{FeRCD, FeKo}). We have
$$
\dot{\mathbf k}=(\mathbb I+\Pi)\dot\omega+\dot\Pi\omega=(\mathbb
I+\Pi)\dot\omega+\epsilon
\Pi[\omega,\omega]-\epsilon[\omega,\Pi\omega].
$$
Thus, we can represent the equations (\ref{mL+R}) in the form
\begin{equation}
\dot\omega=(\mathbb I+\Pi)^{-1}\left([\mathbb
I\omega,\omega]+(1-\epsilon)[\Pi\omega,\omega]\right),\qquad
\dot\Pi= \epsilon [\Pi,\ad_\omega].\label{mL+R*}
\end{equation}

The volume form \eqref{meraL+R} is invariant with respect to the
flow of \eqref{mL+R*} if and only if
\begin{equation}\label{uslov*}
\mu\left(\mathrm{div}(\dot \omega)+\mathrm{div}(\dot
\Pi)\right)+\dot\mu=0,
\end{equation}
where we take  the standard divergence on $\mathfrak g$ and ${\bf
Sym}(\g)$:
$$
\mathrm{div}(\dot\omega)=\sum_i\frac{\partial
\dot\omega_i}{\partial\omega_i}, \qquad
\mathrm{div}(\dot\Pi)=\sum_{i\le j}\frac{\partial
\dot\Pi_{ij}}{\partial \Pi_{ij}}.
$$
Here we take coordinates of $\omega$ and $\Pi$ with respect to the
orthonormal base $E_1,\dots,E_n$ of $\mathfrak g$ and the
associated base $E_i\otimes E_j$ of linear operators on $\g$.

It is clear that $\mathrm{div}(\dot \Pi)=0$. Define $n\times n$
matrix $\Omega$:
$$
\Omega_{ij}=\frac{\partial((\mathbb I+\Pi)\dot\omega)_i}{\partial
\omega_j}=\frac{\partial([\mathbb
I\omega,\omega]+(1-\epsilon)[\Pi\omega,\omega])_i}{\partial
\omega_j}, \quad i, j=1,\dots,n.
$$
Then
$$
\Omega=-\ad_\omega \circ \mathbb I+\ad_{\mathbb
I\omega}+(1-\epsilon)(-\ad_\omega \circ \Pi+\ad_{\Pi\omega})
$$
and we can write
$$
\mathrm{div}(\dot\omega)=\tr((\mathbb I+\Pi)^{-1}\Omega).
$$

In view of symmetry of $\mathbb I+\Pi$, the skew symmetric part of
$\Omega$ does not contribute to the expression for the divergence.

Taking into account \eqref{mL+R}, the symmetric part of $\Lambda$
has the form
\begin{eqnarray*}
\Omega_+ &=&\frac12(\Omega^T+\Omega)=\frac12\left( \mathbb I\circ
\ad_\omega -\ad_\omega \circ \mathbb I+ (1-\epsilon)(\Pi\circ
\ad_\omega-\ad_\omega \circ \Pi )\right)\\
&=& \frac12\left( (\mathbb I+\Pi)\circ \ad_\omega -\ad_\omega
\circ (\mathbb I+\Pi)-\dot\Pi\right).
\end{eqnarray*}

As a result,  we obtain
\begin{eqnarray*}
\mathrm{div}(\dot\omega)&=&  \tr((\mathbb
I+\Pi)^{-1}\Omega_+)\\
&=&\frac12\tr\left((\mathbb I+\Pi)^{-1}((\mathbb I+\Pi)\circ
\ad_\omega -\ad_\omega
\circ (\mathbb I+\Pi)-\dot\Pi)\right) \\
&=& -\frac12\tr\left((\mathbb
I+\Pi)^{-1}\dot\Pi\right)=-\frac12\tr\left((\mathbb
I+\Pi)^{-1}\frac{d}{dt}(\mathbb I+\Pi)\right)\\
&=&-\frac12\left(\det(\mathbb
 I +\Pi)\right)^{-1} \frac{d}{dt}\det (\mathbb I+\Pi)\\
&=& -(\sqrt{\det (\mathbb I+\Pi)})^{-1}\frac{d}{dt}\sqrt{\det
(\mathbb I+\Pi)},
\end{eqnarray*}
where we used the unimodularity condition for compact groups
$\mathrm{tr} \, \mathrm{ad}_{\omega}=0$ and the well-known
identity \eqref{ID}. We conclude that $\mu =\sqrt{\det(\mathbb
 I +\Pi)}$
satisfies the Liouville equation \eqref{uslov*} which establishes
the theorem.
\end{proof}

\begin{remark}{\rm
Note that the kinetic energy  $H=\frac 12\langle\mathbf
k,\omega\rangle$ is conserved for $\epsilon$L+R systems
\begin{eqnarray*}
\frac d{dt} \langle\mathbf k,\omega,\omega\rangle &=& 2
\langle(\mathbb I+\Pi)\dot\omega,\omega\rangle +
  \langle \frac{d}{dt}(\mathbb I+\Pi)\omega,\omega\rangle \\
& =& 2 \langle [\mathbb
I\omega,\omega]+(1-\epsilon)[\Pi\omega,\omega] ,\omega\rangle +
\epsilon\langle\Pi[\omega,\omega]-[\omega,\Pi\omega],\omega
\rangle =0,
\end{eqnarray*}
while, for the $\epsilon$LR systems, the kinetic energy $H=\frac
12\langle \mathbb I\omega,\omega\rangle$ is conserved only on the
invariant submanifold $\phi_s=\langle \omega, e_s \rangle=0$,
i.e., when $\pr_\mathcal H\omega=0$. However in the case
$\epsilon=1$, we have also the preservation of the following
modification of the kinetic energy:
$$
F=\frac12\langle \mathbb I\omega,\omega\rangle-\langle\pr_\mathcal
H\omega,\mathbb I\omega\rangle.
$$
Indeed, $\frac{d}{dt}H=\langle \omega,\Lambda\rangle$, and from
 \eqref{projektori}, \eqref{HO}, we have
\begin{eqnarray*}
\frac{d}{dt}\langle\pr_\mathcal H\omega,\mathbb I\omega\rangle &=&
\langle \epsilon \pr_\mathcal H[\omega,\omega]-\epsilon
[\omega,\pr_\mathcal H\omega], \mathbb I\omega\rangle+ \langle\pr_\mathcal H\omega,[\mathbb I\omega,\omega]+\Lambda\rangle\\
&=& (1-\epsilon)\langle\pr_\mathcal H\omega,[\mathbb
I\omega,\omega]\rangle+\langle \omega,\Lambda\rangle=\langle
\omega,\Lambda\rangle\quad \text{for}\quad \epsilon=1.
\end{eqnarray*}
}\end{remark}

\begin{remark}{\rm
The ($\epsilon$--modified) LR and L+R systems can be considered on
non-compact groups, when we have an invariant measure for
unimodular groups as well. Recall that the group $G$ is {\it
unimodular} if $\tr\ad_{\omega}=0$. }\end{remark}

\begin{example}{\rm
As in Example 1, consider the orthogonal decomposition
$\g=\mathcal H\oplus \mathcal D$, where $\mathcal H$ is the
isotropy algebra of $\gamma=e_1$. Let us take $\Pi=D\pr_\mathcal
D$. Then the modified L+R system \eqref{mL+R} takes the form
\begin{equation*}
\dot{\mathbf k}= [\mathbf k, \omega], \qquad
\dot\gamma=\epsilon[\gamma,\omega], \qquad \mathbf k=\mathbb
I\omega+D\pr_\mathcal D\omega.
\end{equation*}
After the identification $so(3)\cong \R^3$ given by \eqref{iso},
Theorem 3 recovers the invariant measure \eqref{mu-ch}. Another
natural choice for the operator $\Pi$ is
$$
\Pi= D[[\gamma,\omega],\gamma]
$$ (see \cite{H,
Jo4}), which also leads, for $\g=so(3)$, to the sphere-sphere
problem \eqref{Chap}.
 }\end{example}

\subsection{$\epsilon$L+R system on $so(n)\times V_{n,r}$}\label{L+RV}

Here we use the notation of Subsection \ref{MVS}. Consider the
decomposition of $so(n)$ given by \eqref{ort-so(n)}, and take
$\Pi=D\pr_{\mathcal D_r}$. As a result, we obtain the modified L+R
system on the space $so(n)\times V_{n,r}$
\begin{eqnarray}
\nonumber&&\dot{\mathbf k}= [\mathbf k, \omega], \\
\label{mL+R-ch}&&\dot{\mathbf e}_i=-\epsilon \omega \mathbf e_i,
\qquad \qquad\qquad\qquad\,\,\,\, i=1,\dots,r,\\
\nonumber &&\mathbf k=\mathbb
I\omega+D(\Gamma\omega+\omega\Gamma-\Gamma\omega\Gamma), \quad
\Gamma=\mathbf e_1\otimes \mathbf e_1+\dots+\mathbf e_r\otimes
\mathbf e_r.
\end{eqnarray}

For $r=1$ and $\epsilon=1$, the equations \eqref{mL+R-ch} model
the problem of rolling without slipping of a Chaplygin ball over
the hyperplane in $\R^n$, orthogonal to $\mathbf e_1$ (see
\cite{FeKo, Jo4}). As in \cite{Jo4} (see eq. (49) of \cite{Jo4}),
consider the metric defined by inertia operator
\begin{equation}\label{ch-op}
\mathbb I(\mathbf E_i\wedge\mathbf
E_j)=\frac{Da_ia_j}{D-a_ia_j}\mathbf E_i\wedge\mathbf E_j,
\end{equation}
where $0< a_ia_j <D$, $i,j=1,\dots,n$.

\begin{thm}
The $\epsilon$L+R system \eqref{mL+R-ch}, with the inertia
operator given by \eqref{ch-op}
 has an invariant measure
\begin{equation*}
\left(\sum_{I}\frac{(\mathbf e_1 \wedge \cdots \wedge \mathbf
e_r)^2_{I}}{a_{i_1} \cdots a_{i_r}}
 \right)^{-\frac12({n-r-1})}{\mathrm d}{\mathbf k}\wedge {\mathrm
d}\mathbf e_{1} \wedge \dots \wedge {\mathrm d}\mathbf
e_r\vert_{so(n)\times V_{n,r}}.
\end{equation*}
For $r=1$, the density is proportional to $(\mathbf e_1, A^{-1}
\mathbf e_1)^{-\frac{1}{2}({n-2})}$.
\end{thm}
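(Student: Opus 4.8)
The plan is to read \eqref{mL+R-ch} as the $\epsilon$L+R system \eqref{mL+R} with the symmetric operator $\Pi=D\pr_{\mathcal D_r}$, to invoke Theorem \ref{TL+R}, and then to evaluate the resulting determinant in Plücker coordinates. The construction preceding the statement already exhibits \eqref{mL+R-ch} as \eqref{mL+R} with $\Pi=D\pr_{\mathcal D_r}$: the frame law $\dot{\mathbf e}_i=-\epsilon\omega\mathbf e_i$ is precisely what makes $\Gamma=\sum_{i\le r}\mathbf e_i\otimes\mathbf e_i$, and hence $\pr_{\mathcal D_r}$, evolve by $\dot\Pi=\epsilon[\Pi,\ad_\omega]$, the L+R analogue of \eqref{projektori}. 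Theorem \ref{TL+R} therefore supplies the invariant density $\mu=\sqrt{\det(\mathbb I+D\pr_{\mathcal D_r})}$, with the measure written in the momentum variable as $\mu^{-1}\,\mathrm d\mathbf k\wedge\mathrm d\Pi$. It remains to compute this determinant for the inertia operator \eqref{ch-op} and to normalize the $\Pi$-measure against $\mathrm d\mathbf e_1\wedge\cdots\wedge\mathrm d\mathbf e_r$.

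For the determinant I would write $\pr_{\mathcal D_r}=\mathbb E-\pr_{\mathcal H_r}$, so that $\mathbb I+D\pr_{\mathcal D_r}=B-D\pr_{\mathcal H_r}$ with $B=\mathbb I+D\mathbb E$ diagonal in the basis $\mathbf E_i\wedge\mathbf E_j$. Factoring out $B$ gives $\det(\mathbb I+D\pr_{\mathcal D_r})=\det B\cdot\det(\mathbb E-DB^{-1}\pr_{\mathcal H_r})$, and the Weinstein--Aronszajn identity turns the second factor into the determinant of the compression of $\mathbb E-DB^{-1}$ to $\mathcal H_r$. The decisive algebraic point is the exact identity $\mathbb E-DB^{-1}=\tfrac1D\mathbb J$, valid for \eqref{ch-op}, where $\mathbb J$ is the Veselova operator $\mathbb J(\mathbf E_i\wedge\mathbf E_j)=a_ia_j\,\mathbf E_i\wedge\mathbf E_j$ of \eqref{ves-op}. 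Hence that compression is $\tfrac1D\mathbb J|_{\mathcal H_r}$ and the determinant collapses to $\det(\mathbb J|_{\mathcal H_r})$ up to the constant $\det B\cdot D^{-\binom{n-r}{2}}$.

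The determinant $\det(\mathbb J|_{\mathcal H_r})$ is now genuinely computable because, unlike $\mathcal D_r$, the subspace $\mathcal H_r=\wedge^2 W$ with $W=\Span\{\mathbf e_{r+1},\dots,\mathbf e_n\}$ is the second exterior power of a plane. Using $\mathbb J=\wedge^2A$ and the compression formula $(\wedge^2A)|_{\wedge^2W}=\wedge^2(A|_{W})$, I get $\det(\mathbb J|_{\mathcal H_r})=(\det(A|_{W}))^{\,n-r-1}$, while Cauchy--Binet gives $\det(A|_{W})=\sum_{|J|=n-r}a_{j_1}\cdots a_{j_{n-r}}(\mathbf e_{r+1}\wedge\cdots\wedge\mathbf e_n)_J^2$. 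Hodge duality converts these Plücker coordinates of the complementary plane into those of $\mathbf e_1\wedge\cdots\wedge\mathbf e_r$ and replaces $a_{j_1}\cdots a_{j_{n-r}}$ by $\det A/(a_{i_1}\cdots a_{i_r})$, so that $\det(\mathbb I+D\pr_{\mathcal D_r})$ equals a constant times $\big(\sum_I (\mathbf e_1\wedge\cdots\wedge\mathbf e_r)_I^2/(a_{i_1}\cdots a_{i_r})\big)^{n-r-1}$. Then $\mu^{-1}=(\det)^{-1/2}$ reproduces the stated density; the $r=1$ case reduces to the identity $\sum_i e_{1i}^2/a_i=(\mathbf e_1,A^{-1}\mathbf e_1)$.

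The step I expect to be the main obstacle is the passage from $\mu^{-1}\,\mathrm d\mathbf k\wedge\mathrm d\Pi$ on $\g\times{\bf Sym}(\g)$ to $\mathrm d\mathbf k\wedge\mathrm d\mathbf e_1\wedge\cdots\wedge\mathrm d\mathbf e_r$ on $so(n)\times V_{n,r}$. Because $\Pi=D\pr_{\mathcal D_r}$ sees only the $r$-plane, the operator $\Pi$ sweeps out a copy of the Grassmannian rather than the Stiefel variety, so the reduction to $V_{n,r}$ is not a literal restriction of the volume form and one must check that it contributes only a constant factor relative to $\mathrm d\mathbf e_1\wedge\cdots\wedge\mathrm d\mathbf e_r$. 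This is exactly the normalization effected, in the LR/Veselova setting, by Theorem 5.1 of \cite{FeJo} used in the derivation of Theorem 4; since the frame law $\dot{\mathbf e}_i=-\epsilon\omega\mathbf e_i$ is common to both systems, the same change of variables applies here, and once it is in place the determinant computation above yields the stated invariant measure.
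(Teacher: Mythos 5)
Your argument is correct, and the overall strategy (cast \eqref{mL+R-ch} as \eqref{mL+R} with $\Pi=D\pr_{\mathcal D_r}$, apply Theorem \ref{TL+R}, then evaluate $\det(\mathbb I+D\pr_{\mathcal D_r})$ in Pl\"ucker coordinates) matches the paper's. Where you genuinely diverge is in the determinant evaluation. The paper's proof introduces the operator $\mathbf I=\mathbb E+D\mathbb I^{-1}$ and the variables $\mathbf w=\mathbb I\omega$, $\mathbf m=\pr_{\mathcal D_r}\mathbf I\mathbf w+\pr_{\mathcal H_r}\mathbf w$, checks the momentum identification $\mathbf m=\mathbf k$, and then factors $\det(\mathbb I+\Pi)=\det(\partial\mathbf k/\partial\omega)=\det(\mathbf I\vert_{\mathcal D_r})\cdot\det\mathbb I$ via the identity \eqref{det}; since \eqref{ch-op} gives $\mathbf I(\mathbf E_i\wedge\mathbf E_j)=Da_i^{-1}a_j^{-1}\mathbf E_i\wedge\mathbf E_j$, the Pl\"ucker expression for $\det(\mathbf I\vert_{\mathcal D_r})$ is then imported wholesale from Theorem 5.1 of \cite{FeJo} (the same external result used for Theorem 4). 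You instead pass to the complementary subspace: $\mathbb I+D\pr_{\mathcal D_r}=B-D\pr_{\mathcal H_r}$ with $B=\mathbb I+D\mathbb E$, Weinstein--Aronszajn reduces everything to the compression of $\mathbb E-DB^{-1}=\tfrac1D\wedge^2A$ to $\mathcal H_r$, and since $\mathcal H_r=\wedge^2W$ you can compute $\det(\wedge^2(A|_W))=(\det(A|_W))^{n-r-1}$ and finish with Cauchy--Binet and Hodge duality (your identity $\mathbb E-DB^{-1}=\tfrac1D\wedge^2 A$ checks out: $B$ has eigenvalues $D^2/(D-a_ia_j)$, so $\mathbb E-DB^{-1}$ has eigenvalues $a_ia_j/D$). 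The paper's route buys a conceptual link between the $\epsilon$L+R Chaplygin-ball system and the modified Veselova system at the price of citing \cite{FeJo}; your route buys a self-contained proof of the Pl\"ucker formula, exploiting the fact that $\mathcal H_r$, unlike $\mathcal D_r$, is a full second exterior power. Your closing caveat about normalizing $\mathrm d\Pi$ against $\mathrm d\mathbf e_1\wedge\dots\wedge\mathrm d\mathbf e_r$ is a real point, but the paper treats it no more explicitly than you do, deferring to the same change of variables as in Theorem 4, so you are not below the paper's own standard of rigor there.
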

\begin{proof}

Motivated by the relationship between 3--dimensional Veselova
problem and the Chaplygin ball problem established by Fedorov
\cite{Fe1}, define the operator $\mathbf I$ and matrixes $\mathbf
w, \mathbf m$ by
$$
\mathbf I=\mathbb E+D\mathbb I^{-1}, \qquad \mathbf w=\mathbb
I\omega, \qquad \mathbf m=\pr_{\mathcal D_r}\mathbf I\mathbf
w+\pr_{\mathcal H_r}\mathbf w.
$$
Then
$$
\mathbf m=\pr_{\mathcal D_r}\mathbb I\omega+\pr_{\mathcal
D_r}D\mathbb I^{-1}\mathbb I\omega+\pr_{\mathcal H_r}\mathbb
I\omega=\mathbb I\omega+D\pr_{\mathcal D_r}\omega=\mathbf k.
$$

Therefore, by using \eqref{det}, we get
$$
\det(\mathbb I+\Pi)=\det\frac{\partial \mathbf
k}{\partial\omega}=\det \frac{\partial \mathbf m}{\partial\mathbf
w}\cdot \det \frac{\partial \mathbf
w}{\partial\omega}=\det(\mathbf I\vert_{\mathcal
D_r})\cdot\det\mathbb I\,.
$$

Now, the definition of $\mathbb I$ can be seen as follows: it
implies the identity $\mathbf I(\mathbf E_i\wedge \mathbf
E_j)=Da_i^{-1}a_j^{-1}\mathbf E_i\wedge \mathbf E_j$. By combining
the above expressions with Theorem \ref{TL+R} and Theorem 5.1 of
\cite{FeJo} we obtain the statement.
\end{proof}

\begin{remark}{\rm
The system \eqref{Chap} is integrable by the Euler--Jacobi theorem
\cite{AKN} for $\epsilon=1$ (Chaplygin \cite{Ch1}, see also
\cite{AKN, BM}) and for $\epsilon=-1$ (Borisov and Fedorov
\cite{BF}, see also \cite{BM, BFM}). Similarly, in the case of the
rubber rolling, we have integrability for $\epsilon=1$ (see
\cite{VeVe2, EKR}) and $\epsilon=-1$ (see \cite{BM2, BM}). The
problem with the addition of potential forces is studied in
details in \cite{BMB}. On the other hand, the integrable models of
the rolling of the Chaplygin ball (on the zero level set of the
$SO(n-1)$--momentum map) and the rubber Chaplygin ball  over a
hyperplane in $\R^n$ are given in \cite{Jo4} and \cite{FeJo, Jo3},
respectively. It would be interesting to study appropriate
problems where we have a spherical surface instead of a
hyperplane. The natural $n$--dimensional variants of the equation
\eqref{Chap} and \eqref{rubber} are, respectively, the equations
\eqref{mL+R-ch} and \eqref{veselova} (with $\mathbb I$ replaces by
$\mathbf I=\mathbb I+D\mathbb E$), where we set
$r=1$.}\end{remark}

\subsection*{Acknowledgments} I am grateful to the
referee for useful suggestions.
 The research was supported by the Serbian Ministry of Education and
Science Project 174020 Geometry and Topology of Manifolds,
Classical Mechanics, and Integrable Dynamical Systems.

\end{document}